\definecolor{darkgreen}{rgb}{0,0.2,0}
\definecolor{darkred}{rgb}{0.3,0,0}
\newcounter{llst}
\newenvironment{abet}{\begin{list}{\rm (\alph{llst})}{\usecounter{llst}
\setlength{\itemindent}{0em} \setlength{\leftmargin}{3em}
\setlength{\labelwidth}{2em} \setlength{\labelsep}{1em}}}{\end{list}}
\newcounter{llist}
\newenvironment{numm}{\begin{list}{\rm (\roman{llist})}{\usecounter{llist}
\setlength{\itemindent}{0em} \setlength{\leftmargin}{3.5em}
\setlength{\labelwidth}{2.5em} \setlength{\labelsep}{1em}}}{\end{list}}
\newtheorem{theorem}{Theorem}[section]
\newtheorem{definition}[theorem]{Definition}
\newtheorem{expl}[theorem]{Example}
\newtheorem{proposition}[theorem]{Proposition}
\newtheorem{dscrpt}[theorem]{Description}
\newcounter{axiomatiser}
\newcounter{myclaimcount}
\newenvironment{proof}[1][Proof]{\noindent \textbf{#1.} }{\hfill
\rule{0.5em}{0.5em}}
\newenvironment{example}{\begin{expl} \rm}{\hfill $\blacklozenge$
\end{expl}}
\begin{document}

\begin{titlepage}

\title{\textbf{Game theoretic foundations of \\ the Gately power measure for directed networks}\thanks{We thank participants in the BiNoMa 2023 workshop for their helpful comments on a previous draft of this paper. This paper also benefited from discussions with Ren\'e van den Brink for which we are very grateful.}}

\author{Robert P.~Gilles\thanks{Department of Economics, The Queen's University of Belfast, Riddel Hall, 185 Stranmillis Road, Belfast, BT9~5EE, UK. \textsf{Email: r.gilles@qub.ac.uk}} \and Lina Mallozzi\thanks{\textbf{Corresponding author:} Department of Mathematics and Applications, University of Naples Federico II, Via Claudio 21, 80125 Naples, Italy. E-mail: \textsf{mallozzi@unina.it}} }

\date{August 2023}

\maketitle

\begin{abstract}
\singlespace
\noindent
We introduce a new network centrality measure founded on the Gately value for cooperative games with transferable utilities. A directed network is interpreted as representing control or authority relations between players---constituting a \emph{hierarchical} network. The power distribution of a hierarchical network can be represented through a TU-game. We investigate the properties of this TU-representation and investigate the Gately value of the TU-representation resulting in the Gately power measure. We establish when the Gately measure is a Core power gauge, investigate the relationship of the Gately with the $\beta$-measure, and construct an axiomatisation of the Gately measure.
\end{abstract}

\begin{description}
\singlespace
\item[Keywords:] Cooperative game; authority network; network centrality measure; Gately measure; axiomatisation.

\item[JEL classification:] A14; C71; D20
\end{description}

\thispagestyle{empty}

\end{titlepage}



\section{Introduction}

The concept of \emph{network centrality} has emerged from sociology, social network analysis and network science \citep{Newman2010book,Barabasi2016} into the field of cooperative game theory, giving rise to game theoretic methods to measure the most important and dominant nodes in a hierarchical social network \citep{Hierarchy2022}. The underlying method is to construct a cooperative game theoretic representation of characterising features of a network and to apply cooperative game theoretic analysis to create centrality measures for these networks.\footnote{We refer to \citet{Centrality2018} for an overview of the literature on cooperative game theoretic constructions of centrality measures in directed networks.}

We limit ourselves to directed networks as representations of collections of hierarchical or control relationships between the constituting players in a network. Such hierarchical relationships can be found in employment dynamics between managers and subordinates, the interaction between a professor and her students, rivalry between different sports teams based on past performance between them, or the connections between a government agent and the individuals they oversee. We refer to these relationships as ``hierarchical'' since it implies that the predecessor node has some level of control or authority over the successor node.\footnote{This authority can also be psychological and be influential on reputational features in the relationship. An example might be the relationship between two chess players. One of these players can have a psychological advantage over the other based on outcomes of past games between the two players and/or the Elo rating differential between the two players. } Taking this interpretation central, we refer to these directed networks as \emph{hierarchical}. 

A hierarchical relationship is between a \emph{predecessor} and a \emph{successor}, where the predecessor exercises some form of control or authority over the successor. The most natural representation is that through a TU-game that assigns to every group of players their ``total number of successors'', which can be interpreted in various ways. We consider the two standard ways: Simply counting the successors of all members of the group, i.e., the number of players that have at least one predecessor that is member of the group; or, counting the number of players for which \emph{all} predecessors are member of that group.\footnote{As \citet{Centrality2018} point out, the successor representations are only one type of representation of a hierarchical network. More advanced TU-representations have been pursued by \citet{Brink2023Network}.} We show that these two TU-representations are dual games. We remark that \citet{BrinkBorm2002} already characterised the main ``strong'' successor representation as a convex game \citep{Shapley1971}, implying that its dual ``weak'' successor representation is a concave game. 

A \emph{power gauge} for a network is now introduced as a vector of weights that are assigned to players in the hierarchical network that represent or measure each player's authority in that network. A \emph{power measure} is now introduced as a map that assigns to every hierarchical network a single power gauge. In this paper we investigate some power measures that assign such gauges founded on game theoretic principles related to the two TU-representations of hierarchical networks considered here. In particular, each simple hierarchical network---in which each player has at most one predecessor---has a natural power gauge in the form of the outdegree of each node in the network, representing the number of successors of a player in that network.

Application of the Shapley value \citep{Shapley1953} to the successor representations results in the $\beta$-power measure \citep{BrinkGilles1994}. This is the centre of the set of Core power gauges for each network. In the $\beta$-measure the weight of a player is equally divided among its predecessors. As such it has a purely individualistic foundation to measuring power. 

Applying the Gately value of the successor representations results in a fundamentally different conception of a power measure. Here, the set of dominated nodes is treated as a \emph{collective resource} that is distributed according to a chosen principle. In the Gately measure this is the proportional distribution.\footnote{We remark that other distribution principles can also be applied such as distributions founded on egalitarian fairness considerations. This falls outside the scope of the present paper.} This stands in contrast to the individualistic perspective of the $\beta$-measure. 

Since, the Gately measure is founded on such different principles, it is not a surprise that the assigned Gately power gauges are not necessarily Core power gauges. We identify conditions under which the assigned Gately power gauge is a Core power gauge in Theorem \ref{thm:GatelyCore}. In particular, we show that for the class of (weakly) regular hierarchical networks, the Gately power measure assigns a Core power gauge for that network. 

We are able to devise an axiomatic characterisation of the Gately value as the unique power measure that satisfies three properties. First, it is normalised to the number of nodes that have predecessors, which is satisfied by many other power measures as well. Second, it satisfies ``normality'' which imposes that a power measure assigns the full weight of controlling successors with no other predecessors and the power measure of the reduced network with only those nodes that have multiple predecessors. Finally, it satisfies a proportionality property in the sense that the power measure assigned is proportional to how many successors a node has. 

Finally, we address the question under which conditions the $\beta$- and Gately power measures are equivalent. We show that for the class of weakly regular hierarchical networks this equivalence holds. This is exactly the class of networks for which the Gately measure assigns a Core power gauge. This insight cannot be reversed, since there are non-regular networks for which the Gately and $\beta$-measures are equivalent. 

\paragraph{Relationship to the literature}

The study of centrality in networks has evolved to be a significant part of network science \citep{Newman2010book,Barabasi2016}. In economics and the social sciences there has been a focus on Bonacich centrality in social networks. This centrality measure is founded on the eigenvector of the adjacency matrix that represents the network \citep{Bonacich1987}. In economics this has been linked to performance indicators of network representations of economic interactions such as production networks \citep{KeyPlayer2006,Prodnetworks2016,Allouch2021}. The nature of these networks is that they are undirected and, therefore, fundamentally different from the hierarchical networks considered here.

Traditionally, the investigation of directed networks focussed on \emph{degree centrality}---measuring direct dominance relationships \citep{BrinkGilles2003,BrinkRusinowska2022}---and on \emph{betweenness centrality}, which considers the position of nodes in relation to membership of (critical) pathways in the directed network \citep{Bavelas1948,White1994,Newman2003betweenness,Arrigo2018}.

Authority and control in networks has only more recently been investigated from different perspectives. \citet{Barabasi2012} considers an innovative perspective founded on control theory. More prevalent is the study of centrality in hierarchical networks through the $\beta$-measure and its close relatives. \citet{BrinkGilles1994} introduced the $\beta$-measure as a natural measure of influence and considered some non-game theoretic characterisations. The $\beta$-measure is closely related to the PageRank measure introduced by \citet{PageRank1998} and considered throughout the literature on social network centrality measurement.

The $\beta$-measure has been linked to game theoretic measurement of centrality in directed networks by \citet{BrinkGilles2000} and \citet{BrinkBorm2002}. The $\beta$-measure was identified as the Shapley value of the standard successor representations as TU-representations of domination and control in directed networks. \citet{BrinkBorm2005} develop this further through additional characterisations. \citet{Brink2023Network} introduce other, more advanced TU-representations of directed networks and study their Shapley values. They consider a family of centrality measures resulting from this methodology.

\citet{CentralityGameTheory2003}, \citet{CentralityGameTheory2011} and \citet{CentralityGameTheory2018} introduce and explore a game theoretic methodology for measuring network power that is fundamentally different from the methodology used in this paper and the literature reviewed above.  These authors consider a well-chosen TU-game on a networked population of players and subsequently compare the allocated payoffs based on the Shapley value in the unrestricted game with the Shapley value of the network-restricted TU-game. The normalisation of the generated differences now exactly measure the network-positional effects on the players, which can be interpreted as a centrality measure. 

Finally, with regard to the Gately value as a solution concept for TU-games, this conception was seminally introduced for some specific 3-player cost games by \citet{Gately1974}. This contribution inspired a further development of the underlying conception of ``propensity to disrupt'' by \citet{Gately1976} and \citet{Gately1978}, including the definition of several related solution concepts. \citet{Gately1976} also developed an example of a 4-player TU-game in which the Gately value is not a Core imputation. More recently, \citet{Gately2019} generalised the scope of the Gately value and identified exact conditions under which this value is well-defined. This has further been developed by \citet{GillesMallozzi2023}, which showed that the Gately value is always a Core imputation for 3-player games, devised an axiomatisation for the Gately value for arbitrary TU-games, and introduced a generalised Gately value founded on weighted propensities to disrupt. 

\paragraph{Structure of the paper}

Section 2 discusses the foundations of the game theoretic approach that is pursued in this paper. It defines the successor representations and presents their main properties. Furthermore, the standard solution concepts of the Core and the Shapley value are applied to these successor representations. Section 3 introduces the Gately measure, which represents a different philosophy of measuring the exercise of control and power in a network. We investigate when the Gately measure assigns a Core power gauge to a network and we devise an axiomatisation of the Gately measure. The paper concludes with a comprehensive comparison of the Gately and $\beta$-measures, identifying exact conditions under which these two measures are equivalent.

\section{Game theoretic representations of hierarchical networks}

In our study, we focus on networks with directed links, where each link has specifically the interpretation of being the representation of a hierarchical relationship. In a directed network, the direction of a link indicates that one node is positioned as a predecessor while the other node is considered a successor in that particular relationship. Here we interpret this explicitly as a control or authority relationship. Therefore, we denote these networks as \emph{hierarchical} throughout this paper. 

In hierarchical networks, predecessors exercise some form of authority over its successors, allowing the assignment of that control to that particular node. This results in a natural game theoretic representation. We explore these game theoretic representations in this section and investigate the properties of these games.

\paragraph{Notation: Representing hierarchical networks}

Let $N = \{ 1, \ldots , n \}$ be a finite set of nodes, where $n \in \mathbb N$ is the number of nodes considered. Usually we assume that $n \geqslant 3$. A \emph{hierarchical network} on $N$ is a map $D \colon N \to 2^N$ that assigns to every node $i \in N$ a set of \emph{successors} $D(i) \subseteq N \setminus \{ i \}$. We explicitly exclude that a node succeeds itself, i.e., $i \notin D(i)$. The class of all directed networks on node set $N$ is denoted as $\mathbb D^N = \{ D \mid D \colon N \to 2^N$ with $i \notin D(i)$ for all $i \in N \}$.\footnote{We emphasise that in our setting, hierarchical networks are not necessarily tiered or top-down. Hence, we allow these networks to contain cycles and even binary relationships. This allows the incorporation of sports competitions and other social activities to be represented by these hierarchical networks.}

Inversely, in a directed network $D \in \mathbb D^N$, for every node $i \in N$, the subset $D^{-1} (i) = \{ j \in N \mid i \in D(j) \}$ denotes the set of its \emph{predecessors} in $D$. Due to the general nature of the networks considered here, we remark that it might be the case that $D(i) \cap D^{-1} (i) \neq \varnothing$, i.e., some nodes can be successors as well as predecessors of a node.

We introduce the following additional notation to count the number of successors and predecessors of a node in a network $D \in \mathbb D^N \colon$
\begin{numm}
	\item The map $s_D \colon N \to \mathbb N$ counts the number of successors of a node defined by $s_D (i) = \# D(i)$ for $i \in N$;
	\item The map $p_D \colon N \to \mathbb N$ counts the number of predecessors of a node defined by $p_D (i) = \# D^{-1} (i)$ for $i \in N$;
\end{numm}
The previous analysis leads to a natural partitioning of the node set $N$ into different classes based on the number of predecessors of the nodes in a given network $D \in \mathbb D^N \colon$
\begin{align*}
	N^o_D & = \{ i \in N \mid D^{-1} (i) = \varnothing \} = \{ i \in N \mid p_D (i) =0 \} \\
	N_D & = N \setminus N^o_D = \{ i \in N \mid p_D (i) \geqslant 1 \} \\[1ex]
	N^a_D & = \{ i \in N \mid p_D (i) = 1 \} \\
	N^b_D & = \{ i \in N \mid p_D (i) \geqslant 2 \}
\end{align*}
Note that $N = N^o_D \cup N_D$, $N_D = N^a_D \cup N^b_D$. In particular, $\{ N^o_D , N^a_D , N^b_D \}$ forms a partitioning of the node set $N$. We introduce counters $n_D = \# N_D$, $n^a_D = \# N^a_D$ and $n^b_D = \# N^b_D$, leading to the conclusion that $n_D = n^a_D + n^b_D$.

The constructed partitioning informs the analysis of the game theoretic representation of the hierarchical authority structure imposed by $D$ on the node set $N$. Our analysis will show that for certain centrality measures, the class of nodes that have multiple predecessors $N^b_D$ plays a critical role in the determination of the assignment of a power index to these predecessors.

The partitioning of the node set $N$ based on the structure imposed by $D \in \mathbb D^N$ allows further notation to be introduced for every node $i \in N \colon$
\begin{itemize}
	\item $s^a_D (i) = \# \left[ \, D(i) \cap N^a_D \, \right]$ and $s^b_D (i) = \# \left[ \, D(i) \cap N^b_D \, \right]$, resulting in the conclusion that $s_D (i) = s^a_D (i) + s^b_D (i)$.
	\item From the definitions above we conclude immediately that
	\[
	\sum_{i \in N} s^a_D (i) = \# N^a_D
	\]
	and
	\[
	\sum_{i \in N} s^b_D (i) = \sum_{j \in N^b_D} p_D (j) .
	\]
\end{itemize}

\paragraph{Classes of hierarchical networks}

The next definition introduces some normality properties on hierarchical networks that will be used for certain theorems.
\begin{definition}
	Let $D \in \mathbb D^N$ be some hierarchical network on node set $N$.
	\begin{abet}
		\item The network $D$ is \textbf{weakly regular} if for all nodes $i,j \in N^b_D \colon p_D (i) = p_D (j)$. \\ The collection of weakly regular hierarchical networks is denoted by $\mathbb D^N_w \subset \mathbb D^N$.
		\item The network $D$ is \textbf{regular} if for all nodes $i,j \in N_D \colon p_D (i) = p_D (j)$. \\ The collection of regular hierarchical networks is denoted by $\mathbb D^N_r \subset \mathbb D^N_w$.
		\item The network $D$ is \textbf{simple} if for every node $i \in N_D \colon p_D (i) =1$. \\ The collection of simple hierarchical networks is denoted by $\mathbb D^N_s \subset \mathbb D^N_r$.
	\end{abet}
\end{definition}
In a regular network, each node has either no predecessors, or a given fixed number of predecessors. Hence, all nodes with predecessors have exactly the same number of predecessors. In a weakly regular network, each node has either no predecessors, or exactly one predecessors, or a given fixed number $p \geqslant 2$ of predecessors.

The notion of a simple network further strengthens the requirement of a regular network. It imposes that all nodes either have no predecessors, or exactly one predecessor. 

Furthermore, \citet{BrinkBorm2002} introduced the notion of a \emph{simple subnetwork} of a given network $D \in \mathbb D^N$ on the node set $N$. We elaborate here on that definition.
\begin{definition}
	Let $D \in \mathbb D^N$ be a given hierarchical network on $N$. \\
	A network $T \in \mathbb D^N$ is a \textbf{simple subnetwork} of $D$ if it satisfies the following two properties:
	\begin{numm}
		\item For every node $i \in N \colon T(i) \subseteq D(i)$, and
		\item For every node $j \in N_D \colon p_T (j)=1$.
	\end{numm}
	The collection of a simple subnetworks of $D$ is denoted by $\mathcal S(D)$.
\end{definition}
The collection of simple subnetworks of a given network can be used to analyse the Core of the game theoretic representations of hierarchical games as shown below. It is easy to establish that a hierarchical network $D$ is simple if and only if $\mathcal S (D) = \{ D \}$.

\subsection{Game theoretic representations of hierarchical networks}

Using the notation introduced above, we are able to device cooperative game theoretic representations of hierarchical networks. We recall that a \emph{cooperative game with transferable utilities\/}---or a TU-game---on the node set $N$ is a map $v \colon 2^N \to \mathbb R$ such that $v ( \varnothing ) =0$. A TU-game $v$ assigns to every group of nodes $H \subseteq N$ a certain ``worth'' $v(H) \in \mathbb R$. A group of nodes $H \subseteq N$ is also denoted as a \emph{coalition} of nodes, to use a more familiar terminology from cooperative game theory.

To embody the control or authority represented by a hierarchical network $D \in \mathbb D^N$ on the node set $N$ as a cooperative game, we introduce some additional notation. For every group of nodes $H \subseteq N$ we denote
\begin{equation}
	D(H) = \{ j \in N \mid D^{-1} (j) \cap H \neq \varnothing \} = \cup_{i \in H} \, D(i)
\end{equation}
as the \emph{(weak) successors} of coalition $H$ in $D$. A node is a (weak) successor of a node group if at least one of its predecessors is a member of that group.

Similarly, we introduce
\begin{equation}
	D^* (H) = \{ j \in N \mid \varnothing \neq D^{-1} (j) \subseteq H \} = \{ j \in N_D \mid D^{-1} (j) \subseteq H \}
\end{equation}
as the \emph{strong successors} of coalition $H$ in $D$. A node is a strong successor of a node group if \emph{all} predecessors of that node are members of that group. Clearly, strong successors of a node group are completely controlled by the nodes in that particular group and full control can be exercised. This compares to regular or weak successors of a node group over which the nodes in that group only exercise partial control.

The next definition introduces the two main cooperative game theoretic embodiments of this control over other nodes in a network.
\begin{definition} \label{def:SuccessorGames}
	Let $D \in \mathbb D^N$ be some hierarchical network on node set $N$.
	\begin{abet}
		\item The \textbf{successor representation} of $D$ is the TU-game $s_D \colon 2^N \to \mathbb N$ for every coalition $H \subseteq N$ given by $s_D (H) = \# D(H)$, the number of successors of the coalition $H$ in the network $D$. 
		\item We additionally introduce two \textbf{partial successor representations} as the two TU-games $s^a_D \, , \, s^b_D \colon 2^N \to \mathbb N$, which  for every coalition $H \subseteq N$ are given by $s^a_D (H) = \# \left[ \, D(H) \cap N^a_D \, \right]$ and $s^b_D (H) = \# \left[ \, D(H) \cap N^b_D \, \right]$.
		\item The \textbf{strong successor representation} of $D$ is the TU-game $\sigma_D \colon 2^N \to \mathbb N$ for every coalition $H \subseteq N$ given by $\sigma_D (H) = \# D^*(H)$, the number of strong successors of the coalition $H$ in the network $D$.
	\end{abet}
\end{definition}
The successor representation is also known as the ``successor game'' in the literature and the strong successor representation $\sigma_D$ as the ``conservative successor game'' on $D$ \citep{Gilles2010}. It is clear that the four TU-games introduced in Definition \ref{def:SuccessorGames} embody different aspects of the control exercised over nodes in a given hierarchical network. In particular, these TU-games count the number of successors that are under the control of nodes in a selected coalition.

\paragraph{Properties of successor representations}

The next list collects some simple properties of these four games introduced here.
\begin{proposition}
	Let $D \in \mathbb D^N$ be some hierarchical network on node set $N$. Then the following properties hold regarding the successor representations $s_D$, $s^a_D$, $s^b_D$ and $\sigma_D$:
	\begin{numm}
		\item For every node $i \in N \colon s_D ( \, \{ i \} \, ) = s_D (i)$ and the worth of the whole node set is determined as $s_D (N) = n_D = \# N_D$.
		\item $s_D = s^a_D + s^b_D$.
		\item For every coalition $H \subseteq N \colon s^a_D (H) = \sum_{i \in H} s^a_D (i)$, implying that the partial successor representation $s^a_D$ is an additive game.
		\item For every coalition $H \subseteq N \colon s^b_D (H) \leqslant \sum_{i \in H} s^b_D (i)$.
		\item $\sigma_D = s^a_D + \hat\sigma_D$ where for every coalition $H \subseteq N \colon\hat\sigma_D (H) = \sigma_D (H) - s^a_D (H) \leqslant s^b_D (H)$.
		\item For every node $i \in N \colon \sigma_D ( \, \{ i \} \, ) = s^a_D (i)$ and the worth of the whole node set is determined as $\sigma_D (N) = n_D = \# N_D$.
	\end{numm}
\end{proposition}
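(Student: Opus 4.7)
The proof of this proposition is essentially a routine verification that decomposes each claim into a set-theoretic observation about successors, predecessors, and the partition $\{N^o_D, N^a_D, N^b_D\}$. My plan is to handle the six items in order, leveraging the fact that items (i), (ii) and (vi) are immediate from the definitions, while items (iii)--(v) require tracking how nodes in $N^a_D$ (with a unique predecessor) behave differently from nodes in $N^b_D$ (with multiple predecessors).

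For (i), I would unfold $s_D(\{i\}) = \#D(\{i\}) = \#D(i) = s_D(i)$ from the definition, and note that $D(N) = \bigcup_{j\in N} D(j) = \{j\in N\mid D^{-1}(j)\neq\varnothing\} = N_D$, which gives $s_D(N) = n_D$. For (ii), the key observation is that $D(H) \subseteq N_D$ (every element of $D(H)$ has a predecessor in $H$, hence is not in $N^o_D$), so the partition $N_D = N^a_D \cup N^b_D$ yields $D(H) = (D(H)\cap N^a_D) \sqcup (D(H)\cap N^b_D)$, hence $s_D = s^a_D + s^b_D$ pointwise on coalitions.

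The substantive steps are (iii) and (iv). For (iii), I would exploit the fact that each $j \in N^a_D$ has a unique predecessor: writing that predecessor as $\pi(j)$, the condition $j \in D(H)\cap N^a_D$ becomes $\pi(j)\in H$. Therefore the sets $\{j \in N^a_D \mid \pi(j)=i\}$ for $i\in H$ are pairwise disjoint and their union is $D(H)\cap N^a_D$, giving the additive decomposition $s^a_D(H) = \sum_{i\in H} s^a_D(i)$. For (iv), I would note that for $j \in N^b_D$ with multiple predecessors in $H$, $j$ is counted several times in $\sum_{i\in H} s^b_D(i)$ but only once in $s^b_D(H)$, hence the inequality. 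No overcounting argument is needed beyond this remark. Item (vi) is then obtained by observing that $D^*(\{i\}) = \{j \in N_D \mid D^{-1}(j)\subseteq\{i\}\} = \{j \in N^a_D \mid \pi(j)=i\} = D(i)\cap N^a_D$, and $D^*(N)=N_D$ trivially.

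The only step that requires genuine care is (v), where both $\hat\sigma_D \geq 0$ and $\hat\sigma_D \leq s^b_D$ must be verified. I would first show that $D(H)\cap N^a_D = D^*(H)\cap N^a_D$: the inclusion $\supseteq$ is immediate, and for $\subseteq$, any $j\in D(H)\cap N^a_D$ has its unique predecessor in $H$, so $D^{-1}(j)\subseteq H$ and $j\in D^*(H)$. Combined with (iii), this yields $s^a_D(H) = \#(D^*(H)\cap N^a_D) \leq \sigma_D(H)$, proving $\hat\sigma_D(H)\geq 0$, as well as the identity $\hat\sigma_D(H) = \#(D^*(H)\cap N^b_D)$. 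Since $D^*(H)\cap N^b_D \subseteq D(H)\cap N^b_D$ (every strong successor is a weak successor), we obtain $\hat\sigma_D(H)\leq s^b_D(H)$. The main conceptual obstacle, modest as it is, lies in correctly identifying that the ``strong'' and ``weak'' notions of successor coincide on $N^a_D$ but can differ on $N^b_D$, which is precisely the content of this splitting.
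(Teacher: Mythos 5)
Your proof is correct and is exactly the routine set-theoretic verification the paper has in mind (the paper omits the proof as "straightforward from the definitions"); all six items check out, including the only slightly delicate points — the additivity of $s^a_D$ via the unique-predecessor map $\pi$, and the identity $D(H)\cap N^a_D = D^*(H)\cap N^a_D$ underlying (v) and (vi).
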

These properties follow straightforwardly from the definitions, therefore a proof is omitted.

The next theorem collects some properties of the successor representations that have not been remarked explicitly in the literature on cooperative game theoretic approaches to representations of hierarchical networks.\footnote{We recall that the \emph{unanimity game} of coalition $H \neq \varnothing$ is defined by $u_H \colon 2^N \to \{ 0,1 \}$ such that $u_H (T) =1$ if and only if $H \subseteq T \subseteq N$. This implies that $u_H(T) =0$ for all other coalitions $T \subseteq N$.}
\begin{theorem} \label{thm:GameProperties}
	Let $D \in \mathbb D^N$ be some hierarchical network on node set $N$. Then the following properties hold for the successor representations $s_D$ and $\sigma_D \colon$
	\begin{numm}
		\item The strong successor representation $\sigma_D$ is the dual of the successor representation $s_D$ in the sense that
		\begin{equation}
			\sigma_D (H) = s_D(N) - s_D (N \setminus H) \qquad \mbox{for all } H \subseteq N.
		\end{equation}
		\item The strong successor representation is decomposable into unanimity games with
		\begin{equation}
			\sigma_D = \sum_{j \in N_D} u_{D^{-1} (j)}.
		\end{equation}
		\item The strong successor representation $\sigma_D$ is a convex TU-game \citep{Shapley1971} in the sense that $\sigma_D (H) + \sigma_D (K) \leqslant \sigma_D (H \cup K) + \sigma_D (H \cap K)$ for all $H,K \subseteq N$.
		\item The successor representation $s_D$ is concave in the sense that $s_D (H) + s_D (K) \geqslant s_D (H \cup K) + s_D (H \cap K)$ for all $H,K \subseteq N$
	\end{numm}
\end{theorem}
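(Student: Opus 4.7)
The plan is to prove the four items in the order stated, using each to streamline the proofs of the later ones.

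For (i), I would unpack the definitions on the right-hand side. Observe that $s_D(N) = n_D$ by the earlier proposition, and that $s_D(N \setminus H) = \#\{j \in N : D^{-1}(j) \cap (N \setminus H) \neq \varnothing\}$. Since a node $j$ contributes to $s_D(N \setminus H)$ only if $D^{-1}(j) \neq \varnothing$, i.e.\ only if $j \in N_D$, I would partition $N_D$ into two disjoint pieces: those $j$ with $D^{-1}(j) \subseteq H$ (contributing to $\sigma_D(H)$) and those $j$ with $D^{-1}(j) \cap (N \setminus H) \neq \varnothing$ (contributing to $s_D(N \setminus H)$). These two conditions are mutually exclusive and exhaustive on $N_D$, giving $n_D = \sigma_D(H) + s_D(N \setminus H)$, which rearranges to (i).

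For (ii), I would evaluate the right-hand side pointwise. For any $H \subseteq N$ and any $j \in N_D$, the unanimity game satisfies $u_{D^{-1}(j)}(H) = 1$ precisely when $D^{-1}(j) \subseteq H$, and $0$ otherwise; this is well-defined because $D^{-1}(j) \neq \varnothing$ for $j \in N_D$. Summing over $j \in N_D$ simply counts the nodes $j$ with $D^{-1}(j) \subseteq H$, which is $\sigma_D(H)$ by definition.

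For (iii), I would invoke (ii) together with two classical facts: each unanimity game $u_S$ (for $S \neq \varnothing$) is convex, and any nonnegative linear combination of convex TU-games is convex. Convexity of $u_S$ is an easy case analysis on whether $S$ is contained in $H$, in $K$, in both, or in neither, so I would only sketch this and cite it as standard. Since $\sigma_D$ is a sum of such unanimity games with unit coefficients, convexity follows immediately.

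For (iv), I would use (i) to write $s_D(H) = s_D(N) - \sigma_D(N \setminus H)$, so that $s_D$ is the dual of $\sigma_D$. Setting $A = N \setminus H$ and $B = N \setminus K$, and using De Morgan ($N \setminus (H \cup K) = A \cap B$ and $N \setminus (H \cap K) = A \cup B$), the inequality $s_D(H) + s_D(K) \geqslant s_D(H \cup K) + s_D(H \cap K)$ becomes exactly $\sigma_D(A) + \sigma_D(B) \leqslant \sigma_D(A \cup B) + \sigma_D(A \cap B)$, which is (iii). I do not anticipate a real obstacle here; the only minor subtlety is keeping careful track of which nodes in $N \setminus N_D$ contribute in step (i), which is resolved by the observation that only nodes with nonempty predecessor sets can ever contribute to $s_D$ or $\sigma_D$.
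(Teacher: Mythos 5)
Your proposal is correct and follows essentially the same route as the paper: a direct counting argument for the duality (i), the unanimity decomposition for (ii), convexity via that decomposition for (iii), and concavity of $s_D$ by dualising (iii). The only difference is that you write out the standard details that the paper delegates to a citation (part (ii) is Lemma 2.2 of van den Brink and Borm) or asserts without computation (the De Morgan step in (iv)), and these details are all accurate.
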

\begin{proof}
	Let $D \in \mathbb D^N$ be some hierarchical network on node set $N$ and let the TU-games $s_D$ and $\sigma_D$ be as defined in Definition \ref{def:SuccessorGames}. \\
	To show assertion (i), let $H \subseteq N$, then it holds that
	\begin{align*}
		s_D (N) - s_D (N \setminus H) & = n_D - \# D(N \setminus H) = n_D - \# \{ j \in N \mid D^{-1}(j) \cap (N \setminus H) \neq \varnothing \} \\
		& = n_D - \# \{ j \in N \mid D^{-1} (j) \setminus H \neq \varnothing \} = \# \{ i \in N_D \mid D^{-1}(i) \setminus H = \varnothing  \} \\
		& = \# \{ i \in N_D \mid D^{-1} (i) \subseteq H \} = \sigma_D (H) .
	\end{align*}
	This shows that $\sigma_D$ is indeed the dual game of $s_D$. \\[1ex]
	Assertion (ii) is Lemma 2.2 in \citet{BrinkBorm2002} and assertion (iii) follows immediately from (ii). Finally, assertion (iv) is implied by the fact that $s_D$ is the dual game of $\sigma_D$---following from assertion (i)---and $\sigma_D$ is convex. 
\end{proof}

\bigskip\noindent
The duality between the successor representation and the strong successor representation implies that some cooperative game theoretic solution concepts result in exactly the same outcomes for both games. In particular, we refer to the Core, the Weber set, the Shapley value, and the Gately value of these successor representations as explored below.

\subsection{Some standard solutions of the successor representations}

The cooperative game theoretic approach to measuring power or hierarchical centrality is based on the assignment of a quantified control gauge to every individual node in a given hierarchical network. A power or hierarchical centrality measure now refers to a rule or procedure that assigns to every node in any hierarchical network such a gauge. In this section we set out the foundations for this approach.
\begin{definition}
	Let $D \in \mathbb D^N$ be some hierarchical network. A \textbf{power gauge} for $D$ is a vector $\delta \in \mathbb R^N_+$ such that $\sum_{i \in N} \delta_i = n_D$. \\
	A \textbf{power measure} on $\mathbb D^N$ is a function $m \colon \mathbb D^N \to \mathbb R^N_+$ such that $\sum_{i \in N} \, m_i (D) = n_D$ for every hierarchical network $D \in \mathbb D^N$.
\end{definition}
The normalisation of a power gauge for a network $D \in \mathbb D^N$ to the allocation of the total number of nodes in $N_D$ is a yardstick that is adopted in the literature, which we use here as well. This normalisation is in some sense arbitrary, but it allows a straightforward application of the cooperative game theoretic methodology as advocated here.

The game theoretic approach adopted here allows us to apply basic solution concepts to impose well-accepted properties on power gauges and power measures. The well-known notion of the Core of a TU-game imposes lower bounds on the power gauges in a given hierarchical network. This leads to the following notion.
\begin{definition}
	A \textbf{Core power gauge} for a given hierarchical network $D \in \mathbb D^N$ is a power gauge  $\delta \in \mathbb R^N_+$ which satisfies that for every group of nodes $H \subseteq N \colon \sum_{j \in H} \delta_j \geqslant \sigma_D (H) = \# D^* (H)$.
	\\
	The set of Core power gauges for $D$ is denoted by $\mathcal C (D) \subset \mathbb R^N_+$.
\end{definition}
The Core requirements on a power gauge impose that every group of nodes is collectively assigned at least the number of nodes that it fully controls. This seems a rather natural requirement. The following insight investigates the structure of the set of Core power gauges for a hierarchical network.
\begin{proposition}
	Let $D \in \mathbb D^N$ be some hierarchical network on node set $N$. Then the following hold:
	\begin{numm}
		\item If $D$ is a simple hierarchical network, then there exists a unique Core power gauge, $\mathcal C (D) = \left\{ \delta^D \right\}$, where $\delta^D_i = s_D (i)$ for every node $i \in N$.
		\item More generally, $\mathcal C (D)$ is equal to the Weber set of $\sigma_D$, which is the convex hull of the unique Core power gauges of all simple subnetworks of $D$ given by $\mathcal C(D) = \mathrm{Conv} \, \left\{ \delta^T \mid T \in \mathcal S(D) \, \right\} \neq \varnothing$.
	\end{numm}
\end{proposition}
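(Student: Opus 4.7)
The plan for (i) is to exploit simplicity to reduce $\sigma_D$ to an additive game. If $D$ is simple then every $j \in N_D$ has a unique predecessor $\pi(j) \in N$, so the decomposition in Theorem~\ref{thm:GameProperties}(ii) collapses to $\sigma_D = \sum_{j \in N_D} u_{\{\pi(j)\}}$, which is the additive game $\sigma_D(H) = \sum_{i \in H} s_D(i)$. The Core of an additive game is the singleton consisting of its vector of individual values, which yields $\mathcal{C}(D) = \{\delta^D\}$ with $\delta^D_i = s_D(i)$.

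For (ii) the plan is to apply the classical Shapley--Ichiishi theorem: since $\sigma_D$ is convex by Theorem~\ref{thm:GameProperties}(iii), the Core coincides with the Weber set, i.e.\ the convex hull of all marginal vectors $m^\pi(\sigma_D)$ over permutations $\pi$ of $N$. So it suffices to identify the extreme points of $\mathcal{C}(D)$ with the vectors $\delta^T$ arising from simple subnetworks $T \in \mathcal{S}(D)$. I would first show that every marginal vector is some $\delta^T$. Using the unanimity decomposition of Theorem~\ref{thm:GameProperties}(ii), each term $u_{D^{-1}(j)}$ contributes $1$ only to the last element of $D^{-1}(j)$ in the order~$\pi$. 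Thus setting $f_\pi(j)$ equal to that last predecessor defines a map $f_\pi: N_D \to N$ with $f_\pi(j) \in D^{-1}(j)$, which in turn defines a simple subnetwork $T_\pi \in \mathcal{S}(D)$ by $T_\pi^{-1}(j) = \{f_\pi(j)\}$, and one checks $m^\pi_i(\sigma_D) = \#\{j : f_\pi(j) = i\} = s_{T_\pi}(i) = \delta^{T_\pi}_i$.

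The converse inclusion is the direct verification that every $\delta^T$ with $T \in \mathcal{S}(D)$ lies in $\mathcal{C}(D)$. Indeed $\sum_{i \in N} \delta^T_i = \#N_T = n_D$, and for any $H \subseteq N$, whenever $D^{-1}(j) \subseteq H$ we also have $T^{-1}(j) \subseteq D^{-1}(j) \subseteq H$, so the unique predecessor of $j$ in $T$ lies in $H$. Hence $\sigma_D(H) = \#\{j \in N_D : D^{-1}(j) \subseteq H\} \leqslant \#\{j \in N_D : T^{-1}(j) \subseteq H\} = \sum_{i \in H} \delta^T_i$, so $\delta^T \in \mathcal{C}(D)$. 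Combining the two inclusions gives
\[
\mathrm{Conv}\{\delta^T : T \in \mathcal{S}(D)\} \subseteq \mathcal{C}(D) = W(\sigma_D) \subseteq \mathrm{Conv}\{\delta^T : T \in \mathcal{S}(D)\},
\]
and non-emptiness follows since $\mathcal{S}(D) \neq \varnothing$ (any choice of predecessor for each $j \in N_D$ yields a simple subnetwork).

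The main obstacle I anticipate is the bookkeeping in the marginal-vector identification: making precise how the sum over unanimity games interacts with the permutation and how the resulting choice function really defines a simple subnetwork; the Core inclusion from $\delta^T$ is by contrast a one-line counting argument.
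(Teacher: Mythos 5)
Your proof is correct. For part (i) you follow essentially the same route as the paper: simplicity makes $\sigma_D$ additive with individual worths $s_D(i)$, and the Core of an additive game under the normalisation $\sum_i \delta_i = n_D$ is the singleton $\{\delta^D\}$ (the paper makes the last step explicit: $\delta_i \geqslant \sigma_D(\{i\}) = s_D(i) = \delta^D_i$ for all $i$ together with the sum constraint forces equality). For part (ii) the two arguments genuinely diverge: the paper simply invokes Theorem 4.2 of Brink and Borm (2002) together with assertion (i), whereas you give a self-contained proof --- Shapley--Ichiishi for the convex game $\sigma_D$ gives $\mathcal C(D) = W(\sigma_D)$; the unanimity decomposition $\sigma_D = \sum_{j \in N_D} u_{D^{-1}(j)}$ identifies each marginal vector $m^\pi(\sigma_D)$ with the outdegree vector of the simple subnetwork assigning to each $j \in N_D$ its $\pi$-last predecessor; and a direct counting argument (using that $N_T = N_D$ and that $D^{-1}(j) \subseteq H$ implies $T^{-1}(j) \subseteq H$) shows each $\delta^T$ satisfies the Core inequalities. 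You are in effect reproving the cited result rather than bypassing it; what this buys is independence from the external reference at the cost of the marginal-vector bookkeeping, which you carry out correctly.
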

\begin{proof}
	Let $D$ be a simple hierarchical network. Hence, $p_D (i) =1$ for all $i \in N_D$. Therefore, for every group of nodes $H \subseteq N$ it holds that $\sigma_D (H) = s_D (H) = \sum_{j \in H} s_D (j)$. Therefore, $\delta^D$ as defined above satisfies the Core requirement for every $H \subseteq N$. \\ Furthermore, suppose that $\delta \in \mathcal C(D)$. Then from $\delta_i \geqslant s_D (i) = \delta^D_i$ for every node $i \in N$ and $\sum_{j \in N} \delta_j = n_D = \sum_{j \in N} \delta^D_j$ it immediately follows that $\delta_i = \delta^D_i$ for all $i \in N$. This shows that $\delta^D$ is the unique Core power gauge for the simple hierarchical network $D$, showing assertion (i).
	\\[1ex]
	Assertion (ii) follows immediately from Theorem 4.2 in \citet{BrinkBorm2002} in combination with assertion (i).
\end{proof}

\paragraph{The $\beta$-measure}

A well-established power measure for hierarchical networks was first introduced by \citet{BrinkGilles1994} and further developed in \citet{BrinkGilles2000} and \citet{BrinkBorm2005}. This $\beta$-\emph{measure} is for every node $i \in N$ defined by
\begin{equation}
	\beta_i (D) = \sum_{j \in D(i)} \frac{1}{p_D (j)} = s^a_D (i) + \sum_{j \in D(i) \cap N^b_D} \frac{1}{p_D (j)}
\end{equation}
The following proposition collects the main insights from the literature on the $\beta$-measure.
\begin{proposition}
	Let $D \in \mathbb D^N$ be a hierarchical network. Then the following properties hold:
	\begin{numm}
		\item $\beta (D) \in \mathcal C(D)$ is the geometric centre of the set of Core power gauges of $D$.
		\item $\beta (D) = \varphi ( s_D ) = \varphi ( \sigma_D)$, where $\varphi$ is the Shapley value\footnote{It is well-established that every TU-game $v \colon 2^N \to \mathbb R^N$ can be written as $v = \sum_{H \subseteq N} \Delta_v (H) \, u_H$, where $\Delta_v (H)$ is the Harsanyi dividend of coalition $H$ in the game $v$ \citep{Harsanyi1959}. Now, the Shapley value is for every $i \in N$ defined by $\varphi_i (v) = \sum_{H \subseteq N \colon i \in H} \tfrac{\Delta_v (H)}{\# \, H}$. Hence, the Shapley value fairly distributes the generated Harsanyi dividends over the players that generate these dividends. The Shapley value was seminally introduced by \citet{Shapley1953}.} on the collection of all cooperative games on $N$.
	\end{numm}
\end{proposition}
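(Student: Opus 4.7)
The plan is to handle (ii) first since (i) builds on it. For (ii), I would exploit the decomposition of the strong successor representation into unanimity games provided by Theorem \ref{thm:GameProperties}(ii):
\[
\sigma_D \;=\; \sum_{j \in N_D} u_{D^{-1}(j)} .
\]
Applying linearity of the Shapley value and the standard formula $\varphi_i(u_H) = 1/\# H$ for $i \in H$ (and zero otherwise), I would obtain
\[
\varphi_i(\sigma_D) \;=\; \sum_{j \in N_D \,:\, i \in D^{-1}(j)} \frac{1}{\# D^{-1}(j)} \;=\; \sum_{j \in D(i)} \frac{1}{p_D(j)} \;=\; \beta_i(D) .
\]
The identity $\varphi(s_D) = \varphi(\sigma_D)$ then follows from the duality $\sigma_D(H) = s_D(N) - s_D(N \setminus H)$ established in Theorem \ref{thm:GameProperties}(i), since the Shapley value is self-dual (it assigns the same payoff to a game and its dual).

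For (i), the claim that $\beta(D) \in \mathcal{C}(D)$ is immediate from (ii) combined with Theorem \ref{thm:GameProperties}(iii): the strong successor representation $\sigma_D$ is convex, and by Shapley's theorem on convex games the Shapley value of a convex game lies in its Core, so $\beta(D) = \varphi(\sigma_D) \in \mathcal{C}(\sigma_D) = \mathcal{C}(D)$.

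To show that $\beta(D)$ is the geometric centre of $\mathcal{C}(D)$, I would invoke the preceding proposition characterising $\mathcal{C}(D)$ as the convex hull of the Core power gauges $\delta^T$ of simple subnetworks $T \in \mathcal{S}(D)$. Parametrise $\mathcal{S}(D)$ by the selection functions $f \colon N_D \to N$ with $f(j) \in D^{-1}(j)$; for such $f$, the associated simple subnetwork $T_f$ has $\delta^{T_f}_i = \#\{ j \in N_D \mid f(j) = i \}$. There are $\prod_{j \in N_D} p_D(j)$ such functions, each specifying a unique vertex. Averaging uniformly over all selection functions and applying linearity of expectation yields
\[
\frac{1}{\prod_{j \in N_D} p_D(j)} \sum_{f} \delta^{T_f}_i \;=\; \sum_{j \in N_D} \mathrm{Prob}[f(j)=i] \;=\; \sum_{j \in D(i)} \frac{1}{p_D(j)} \;=\; \beta_i(D),
\]
identifying $\beta(D)$ with the barycentre of the vertex set of $\mathcal{C}(D)$.

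The main obstacle is the last step: the phrase ``geometric centre'' is informal, and a rigorous reading requires fixing the uniform measure over the parametrising set of selection functions rather than over the (possibly collapsed) set of distinct vertices $\delta^{T_f}$. Once this convention is adopted, the computation is routine; without it, coincident vertices would have to be weighted by multiplicity and the identification with the Shapley value would need to be re-derived via the marginal-contribution formula for convex games.
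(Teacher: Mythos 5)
Your argument is correct. The paper itself offers no proof of this proposition---it is presented as a summary of known results with the burden carried by the cited literature (\citet{BrinkGilles1994}, \citet{BrinkBorm2002}, \citet{BrinkBorm2005})---but the route you take is exactly the standard one in those sources: unanimity decomposition of $\sigma_D$ plus linearity of $\varphi$ gives $\varphi(\sigma_D)=\beta(D)$, self-duality of the Shapley value transfers this to $s_D$, and convexity of $\sigma_D$ places $\varphi(\sigma_D)$ in the Core. Your closing caveat is also well judged rather than a defect: ``geometric centre'' must indeed be read as the average of the gauges $\delta^{T_f}$ taken uniformly over simple subnetworks (equivalently, over selection functions $f$), with coincident vertices counted according to multiplicity, and the paper tacitly adopts exactly this convention---see the phrase ``(weighted) centre'' and the footnote in Example \ref{ex:contra} noting that two distinct simple subnetworks produce the same gauge $(3,1,0,0,0)$; averaging over the five \emph{distinct} vertices there would instead yield $\xi(D)$, not $\beta(D)$. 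So your parametrisation by selection functions is not merely a convenient reading but the only one under which assertion (i) is true as stated.
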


\section{The Gately power measure}

Let $v \colon 2^N \to \mathbb R$ be a TU-game on the node set $N$ with $\sum_{i \in N} v( \{ i \} ) \leqslant v(N) \leqslant \sum_{i \in N} M_i (v)$ where $M_i (v) = v(N) - v(N-i)$. Then the \emph{Gately value} of the game $v$ is given by $g(v) \in \mathbb R^N$, which is defined for every node $i \in N$ by
\begin{equation} \label{eq:GatelyValue}
	g_i (v) = v( \{ i \} ) + \frac{M_i (v) - v( \{ i \} )}{\sum_{j \in N} \left( M_j (v) - v( \{ j \} ) \, \right)} \left[ v(N) - \sum_{j \in N} v(\{ j \} ) \, \right]
\end{equation}
The Gately value was seminally introduced by \citet{Gately1974} and further developed by \citet{Gately1976}, \citet{Gately1978}, \citet{Gately2019} and \citet{GillesMallozzi2023}.

We apply the Gately value to the two successor representations formulated above. We show that, similar to the $\beta$-measure, both the regular successor representation and the conservative successor representation result in the same Gately value, defining the \emph{Gately power measure\/}. 
\begin{theorem} \label{prop:GatelyMeasure}
	Let $D \in \mathbb D^N$ be a directed network on node set $N$. Then
	\begin{equation}
		g \left( s_D \right) = g \left( \sigma_D \right) = \xi (D)
	\end{equation}
	where $\xi \colon \mathbb D^N \to \mathbb R^N$ is introduced as the \textbf{Gately power measure} on the class of hierarchical networks $\mathbb D^N$ on $N$ with
	\begin{equation}
	\hspace*{-2em}
		\xi_i (D) = \left\{
		\begin{array}{ll}
			s^a_D (i) + \frac{s^b_D (i)}{\sum_{j \in N^b_D} p_D (j)} \, n^b_D  & \mbox{if } N^b_D \neq \varnothing \\[1ex]
			s^a_D (i) & \mbox{if } N^b_D = \varnothing
		\end{array}
		\right.
	\end{equation}
	for every node $i \in N$. \\
	Furthermore, the Gately measure $\xi$ is the unique power measure that balances the propensities to disrupt a network given by
	\begin{equation}
		\frac{s^b_D (i)}{s_D (i) - \xi_i(D)} = \frac{s^b_D (j)}{s_D (j) - \xi_j(D)}
	\end{equation}
	over all nodes $i,j \in N_D$. 
\end{theorem}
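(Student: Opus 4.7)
The plan is to verify the theorem in three stages: establish the equality $g(s_D) = g(\sigma_D)$ via duality, derive the explicit formula by substituting into \eqref{eq:GatelyValue}, and then check that the balanced-propensity condition pins down $\xi$ uniquely.

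First, I would observe that by Theorem \ref{thm:GameProperties}(i), $\sigma_D$ is the dual of $s_D$, and I would show that the Gately value is invariant under dualisation. For any game $v$ with dual $v^*(H) = v(N) - v(N \setminus H)$, one has $v^*(N) = v(N)$, $v^*(\{i\}) = M_i(v)$, and $M_i(v^*) = v(\{i\})$. Substituting these into the Gately formula and using that $\sum_j(M_j(v)-v(\{j\}))$ is invariant while the ``surplus'' $v(N)-\sum_j M_j(v)$ and $v(N)-\sum_j v(\{j\})$ differ exactly by that same quantity, a short algebraic manipulation shows $g_i(v^*) = g_i(v)$. Applied to $v = s_D$ this gives $g(s_D) = g(\sigma_D)$.

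Second, I would compute $g(\sigma_D)$ directly using the preceding proposition. We have $\sigma_D(\{i\}) = s^a_D(i)$ and $\sigma_D(N) = n_D$. For the marginals, $\sigma_D(N-i)$ counts those $j \in N_D$ with $i \notin D^{-1}(j)$, i.e., $j \in N_D \setminus D(i)$; since $D(i) \subseteq N_D$, this gives $\sigma_D(N-i) = n_D - s_D(i)$ and hence $M_i(\sigma_D) = s_D(i)$. Therefore $M_i(\sigma_D) - \sigma_D(\{i\}) = s^b_D(i)$, and from the earlier identities $\sum_{j}(M_j(\sigma_D)-\sigma_D(\{j\})) = \sum_j s^b_D(j) = \sum_{j \in N^b_D} p_D(j)$, and $\sigma_D(N) - \sum_j \sigma_D(\{j\}) = n_D - n^a_D = n^b_D$. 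Plugging these into \eqref{eq:GatelyValue} produces the stated formula when $N^b_D \neq \varnothing$; when $N^b_D = \varnothing$ the game $\sigma_D$ is additive and the Gately value coincides with the singleton worths $s^a_D(i)$.

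Third, for the balanced-propensity statement I would substitute the formula for $\xi_i(D)$ to obtain
\[
s_D(i) - \xi_i(D) = s^b_D(i) \cdot \frac{\sum_{j \in N^b_D} p_D(j) - n^b_D}{\sum_{j \in N^b_D} p_D(j)},
\]
so that the ratio $s^b_D(i)/(s_D(i)-\xi_i(D))$ is a constant independent of $i$, giving the balance. For uniqueness, any power measure $m$ satisfying $s^b_D(i)/(s_D(i)-m_i(D)) = k$ for some common $k$ must have $m_i(D) = s_D(i) - s^b_D(i)/k$, and the normalisation $\sum_i m_i(D) = n_D$ together with $\sum_i s_D(i) = n^a_D + \sum_{j \in N^b_D} p_D(j)$ forces $k$, and hence $m$, to equal the values computed above.

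The main obstacle is the bookkeeping for the duality identity $g(v)=g(v^*)$ — tracking which quantities flip under $v \mapsto v^*$ and isolating the key cancellation. A secondary care point is well-definedness: the denominator $\sum_{j \in N^b_D} p_D(j) - n^b_D$ is strictly positive because $p_D(j) \geqslant 2$ for all $j \in N^b_D$, and the degenerate case $N^b_D = \varnothing$ must be handled separately since the propensity ratios become $0/0$.
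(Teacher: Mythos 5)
Your proposal is correct, but it reaches the first equality by a genuinely different route than the paper. The paper computes $g(s_D)$ and $g(\sigma_D)$ separately and directly from the network data --- establishing $M_i(s_D)=s^a_D(i)$ for the concave game $s_D$ (where the ``reverse'' cost-game form of \eqref{eq:GatelyValue} is used) and $M_i(\sigma_D)=s_D(i)$ for the convex game $\sigma_D$ --- and then observes that both computations land on the same expression $\xi_i(D)$. You instead prove a general self-duality lemma, $g(v)=g(v^*)$, from the swap $v^*(\{i\})=M_i(v)$, $M_i(v^*)=v(\{i\})$, and then perform only the single computation for $\sigma_D$. Your lemma is valid (writing $d_i=M_i(v)-v(\{i\})$, $S=\sum_j d_j$, $E=v(N)-\sum_j v(\{j\})$, one gets $g_i(v^*)=v(\{i\})+d_i+\tfrac{d_i}{S}(E-S)=g_i(v)$), though note that $\sum_j(M_j(v^*)-v^*(\{j\}))=-S$ is not literally invariant as you state --- only the ratio $d_i/S$ is; and you should also record that the dual of a game satisfying the reverse ordering $\sum_j M_j(v)\leqslant v(N)\leqslant\sum_j v(\{j\})$ satisfies the standard one, so the Gately value is well defined on both sides. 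This buys a cleaner and reusable argument (duality-invariance of the Gately value is of independent interest), at the cost of one extra abstract verification; the paper's double computation is more pedestrian but exhibits both games' structure explicitly. Your treatment of the degenerate case $N^b_D=\varnothing$ and of the balanced-propensity characterisation matches the paper in substance, and your uniqueness argument via $m_i(D)=s_D(i)-s^b_D(i)/k$ plus normalisation is actually more explicit than the paper's appeal to the definitions in the cited reference; the only residual looseness (the $0/0$ ratios when $s^b_D(i)=0$ and the fact that the balance condition is stated only over $i,j\in N_D$) is inherited from the theorem statement itself and is not a defect of your argument relative to the paper's.
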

\begin{proof}
	Let $D \in \mathbb D^N$ be such that $N^b_D \neq \varnothing$. Then its successor representation $s_D$ is characterised for every $i \in N$ by
	\begin{align*}
		s_D (N) & = n_D \\
		s_D ( \{ i \} ) & = s_D (i) = \# D(i) \\
		s_D (N-i) & = n_D - \# \left\{ j \in N^a_D \mid D^{-1} (j) = \{ i \} \, \right\} = n_D - s^a_D (i)
	\end{align*}
	From this it follows that $M_i ( s_D ) = s_D(N) - s_D (N-i) = s^a_D (i)$ for every $i \in N$. Since $N^b_D \neq \varnothing$, this implies furthermore that $s_D (i) \geqslant M_i (s_D)$ for every $i \in N$. Therefore, the Gately value can be applied to this game. \\
	From the previous we further derive that
	\[
	s_D (i) - M_i (s_D) = \# \left\{ j \in N \mid \{ i \} \varsubsetneq D^{-1}(j) \, \right\} =s^b_D (i) 
	\]
	and that
	\begin{align*}
		\sum_{j \in N} s_D ( \{ j \} ) - s_D (N) & = \sum_{ j \in N} s_D(j) - n_D = \sum_{h \in N} p_D (h) - n_D \\
		& = \sum_{ j \in N^b_D} \left( \, p_D (j) -1 \right) = \sum_{ j \in N^b_D} p_D(j) - n^b_D .
	\end{align*}
	These properties imply that $s_D$ is a regular TU-game as defined in \citet{GillesMallozzi2023}. This implies that the Gately value applies to $s_D$. \\
	We now compute the Gately value of the successor representation $s_D$. We note here that $s_D$ is a concave cost game, implying that the reverse formulation of (\ref{eq:GatelyValue}) needs to be applied. Hence, we derive for every $i \in N$ that
	\begin{align*}
		g_i ( s_D ) & = s_D ( \{ i \} ) - \frac{ s_D ( \{ i \} ) - M_i (s_D)}{\sum_{j \in N} \left(  s_D ( \{ i \} ) - M_j (s_D) \right)} \cdot\left( \sum_{j \in N} s_D ( \{ j \} ) - s_D (N) \right) \\[1ex]
		& = s_D (i) - \frac{s^b_D (i)}{\sum_{j \in N} s^b_D (j)} \cdot\left( \sum_{ j \in N^b_D} p_D(j) - n^b_D \, \right) \\[1ex]
		&= s_D (i) - \frac{s^b_D (i)}{\sum_{j \in N^b_D} p_D (j)} \cdot\left( \sum_{ j \in N^b_D} p_D (j) - n^b_D \, \right) \\[1ex]
		& = s_D (i) - s^b_D (i) + \frac{s^b_D (i)}{\sum_{j \in N^b_D} p_D (j)} \cdot n^b_D  \\[1ex]
		& = s^a_D (i)+ \frac{s^b_D (i)}{\sum_{j \in N^b_D} p_D (j)} \cdot n^b_D = \xi_i (D)
	\end{align*}
	Similarly, the conservative successor representation $\sigma_D$ for the hierarchical network $D$ is characterised for every $i \in N$ by
	\begin{align*}
		\sigma_D (N) & = n_D \\
		\sigma_D (i) & = s^a_D (i) \\
		\sigma_D (N-i) & = n_D - s_D (i)
	\end{align*}
	For the conservative successor representation $\sigma_D$ we derive from the above that $M_i (\sigma_D) = s_D (i)$, implying that $\sigma_D (i) \leqslant M_i ( \sigma_D)$ for every $i \in N$. Also, $\sigma_D (i) < M_i ( \sigma_D)$ for some $i \in N$, since $N^b_D \neq \varnothing$. Therefore, $\sigma_D$ is regular as defined in \citet{GillesMallozzi2023}. Hence, the Gately value applies to $\sigma_D$. \\ Since $\sigma_D$ is a convex game, the formulation stated in (\ref{eq:GatelyValue}) applies. Now, we compute that
	\[
	M_i (\sigma_D) - \sigma_D (i) = s_D (i) - s^a_D (i) = s^b_D (i)
	\]
	and
	\begin{align*}
		g_i ( \sigma_D ) & = \sigma_D ( \{ i \} ) + \frac{M_i (\sigma_D \, ) - \sigma_D ( \{ i \} )}{\sum_{j \in N} \left( M_j (\sigma_D \, ) - \sigma_D ( \{ j \} ) \right)} \cdot\left( \sigma_D (N) - \sum_{j \in N} \sigma_D ( \{ j \} ) \right) \\[1ex]
		& = s^a_D (i) + \frac{s^b_D (i)}{\sum_{j \in N} s^b_D (j)} \cdot n^b_D = s^a_D (i) + \frac{s^b_D (i)}{\sum_{j \in N^b_D} p_D (j)} \cdot n^b_D = \xi_i (D)
	\end{align*}
	This shows the first equality in the assertion of the proposition. \\[1ex]
	Next, let $D \in \mathbb D^N$ be such that $N^b_D = \varnothing$. Then $p_D (j) =1$ for all $j \in N_D$. This implies that for every $i \in N \colon M_i ( s_D ) = s_D ( \{ i \} ) = s_D (i)$. Hence, for $i \in N \colon$
	\[
	g_i (s_D) = s_D( \{ i \} ) =  s_D (i) = s^a_D (i) =  \xi_i (D) .
	\]
	Finally, for every $i \in N \colon M_i ( \sigma_D ) = s_D (i) = s^b_D (i) = 0$. Hence,
	\[
	g_i (\sigma_D) = \sigma_D ( \{ i \} ) =  s^a_D (i) = \xi_i (D) .
	\]
	Combined with the previous case, this shows the first assertion of the proposition. \\[1ex]
	Finally, the second assertion of the proposition follows immediately from identifying the propensity to disrupt \citep[Definition 3.2]{GillesMallozzi2023} in the successor representation $s_D$ for some game theoretic imputation $m \in \mathbb R^N$ as
	\[
	\frac{M_i (s_D) - s_D ( \{ i \} )}{m_i - s_D ( \{ i \} )} = \frac{s^b_D (i) - s_D (i)}{m_i - s_D (i)} = \frac{s^b_D (i)}{m_i - s_D (i)} .
	\]
	Using the definition of a Gately point \citep[Definition 3.2]{GillesMallozzi2023} and noting that $\xi_i (D) = s_D (i) =0$ for every $i \in N^o_D$, the second assertion of the theorem is confirmed.
\end{proof}

\bigskip\noindent
The Gately power measure introduced in Theorem \ref{prop:GatelyMeasure} is founded on fundamentally different principles than the $\beta$-measure or other power measures. Now, the Gately power measure is member of a family of values that considers the control exercise over the nodes in $N^b_D$ to be a collective resource in any hierarchical network $D \in \mathbb D^N$. The control is then allocated according to some well-chosen principle. In particular, The Gately measure allocates the control over $N^b_D$ \emph{proportionally} to the predecessor of the nodes in $N^b_D$. Assuming $N^b_D \neq \varnothing$, this proportional allocator is for every node $i \in N$ with $D(i) \cap N^b_D \neq \varnothing$ defined as
\begin{equation}
	a_i (D) = \frac{s^b_D (i)}{\sum_{j \in N} s^b_D (j)} = \frac{s^b_D (i)}{\sum_{h \in N^b_D} p_D (h)}
\end{equation}
where the Gately measure is now given by $\xi_i (D) = s^a_D (i) + a_i (D) \cdot n^b_D$.

This compares, for example, to the allocation principle based on the egalitarian allocator of the power over the nodes in $N^b_D \neq \varnothing$ given by
\begin{equation}
	e_i (D) = \frac{1}{\# \, \{ j \in N \mid D(i) \cap N^b_D \neq \varnothing \}}
\end{equation}
and the resulting \emph{Restricted Egalitarian} power measure given by $\varepsilon_i (D) = s^a_D (i) + e_i (D) \, n^b_D$. We emphasise that the Gately and Restricted Egalitarian power measures are members of the same family of power measures for hierarchical networks, which have a collective allocative perspective on the control over the nodes in $N^b_D$.

\subsection{Properties of the Gately measure}

We investigate the properties of the Gately measure from the cooperative game theoretic perspective developed in this paper. We first investigate whether the Gately measure assigns a Core power gauge as is the case for the $\beta$-measure. Second, we consider some characterisations of the Gately measure. In particular, we derive an axiomatisation as well as investigate some interesting properties of the Gately measure on some special subclasses of hierarchical networks. 

\paragraph{The Gately measure is not necessarily a Core power gauge}

We first establish that contrary to the property that the $\beta$-measure is the geometric centre of the set of Core power gauges of a given hierarchical network, its Gately power gauge does not necessarily have to satisfy the Core constraints. The next example provides a hierarchical network on a node set of 8 nodes which Gately measure is not a Core power gauge.

\bigskip
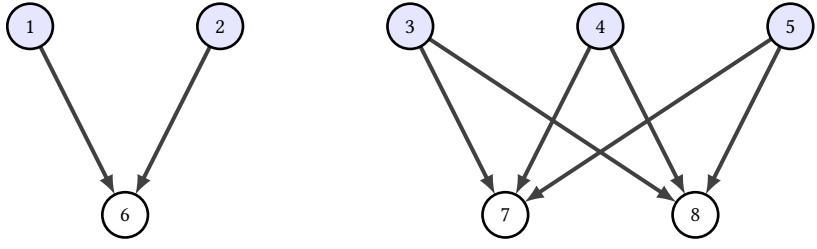
\begin{figure}[h]
\begin{center}
\begin{tikzpicture}[scale=0.25]
\Vertex[x=0,y=10,color=blue!10,label=1]{1}
\Vertex[x=10,y=10,color=blue!10,label=2]{2}
\Vertex[x=20,y=10,color=blue!10,label=3]{3}
\Vertex[x=30,y=10,color=blue!10,label=4]{4}
\Vertex[x=40,y=10,color=blue!10,label=5]{5}
\Vertex[x=5,y=0,color=white,label=6]{6}
\Vertex[x=25,y=0,color=white,label=7]{7}
\Vertex[x=35,y=0,color=white,label=8]{8}

\Edge[Direct](1)(6)
\Edge[Direct](2)(6)
\Edge[Direct](3)(7)
\Edge[Direct](3)(8)
\Edge[Direct](4)(7)
\Edge[Direct](4)(8)
\Edge[Direct](5)(7)
\Edge[Direct](5)(8)

\end{tikzpicture}
\end{center}
\caption{The hierarchical network considered in Example \ref{xpl:CounterCore}.}
\end{figure}

\begin{example} \label{xpl:CounterCore}
	Consider the hierarchical network $D$ depicted in Figure 1 based on a node set $N = \{ 1, \ldots ,8 \}$. We note that $N_D = N^b_D = \{ 6,7,8 \}$ and that nodes 1 and 2 fully control node 6, while nodes 3, 4 and 5 fully control nodes 7 and 8. \\ We compute that $s^a_D (i) =0$ for all $i \in N$ and that, therefore, any power measure only considers the control arrangements of the nodes in $N_D$. In particular,
	\[
	\beta (D) = \left( \, \tfrac{1}{2}, \tfrac{1}{2} , \tfrac{2}{3} , \tfrac{2}{3} , \tfrac{2}{3} , 0,0,0 \, \right) \in \mathcal C (D)
	\]
	and that
	\[
	\xi (D) = \left( \, \tfrac{3}{8}, \tfrac{3}{8} , \tfrac{3}{4} , \tfrac{3}{4} , \tfrac{3}{4} , 0,0,0 \, \right) \notin \mathcal C (D) .
	\]
	Therefore, $\xi_1 (D) + \xi_2 (D) = \tfrac{3}{4} < 1 = \sigma_D ( \{ 1,2 \} )$ shows that the Gately measure does not allocate sufficient power to the first two agents. The underlying reason is that the Gately value considers the control of \emph{all} nodes in $N^b_D$ to be a collective resource that is proportionally distributed according to $s^b_D (i)$. The relative low values of $s^b_D (1) = s^b_D (2) =1$ in comparison with $s^b_D (3) = s^b_D (4) =s^b_D (5) =2$ imply that the assigned share of the first two nodes is less than the total of fully controlled nodes by that node pair.
\end{example}

The following theorem establishes under which conditions the Gately measure assigns a Core power gauge to a hierarchical network. The class of networks identified in (ii) compares with the classes of networks for which the Gately measure is identical to the $\beta$-measure.
\begin{theorem} \label{thm:GatelyCore}
	Let $D \in \mathbb D^N$ be a hierarchical network on $N$. 
	\begin{numm}
		\item If $\# \, \{ i \in N \mid D(i) \neq \varnothing \} \leqslant 3$, then $\xi (D) \in \mathcal C (D)$.
		\item If $D$ is weakly regular, i.e., for all $i,j \in N^b_D \colon p_D (i) = p_D (j)$, then $\xi (D) \in \mathcal C(D)$.
	\end{numm}
\end{theorem}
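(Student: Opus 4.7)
The plan is to handle parts (ii) and (i) by quite different routes. Part (ii) will follow from a direct identification of $\xi(D)$ with $\beta(D)$ on weakly regular networks, combined with the proposition stating that $\beta(D)$ is always a Core power gauge. Part (i) will follow by reducing to the subgame on the nodes that have at least one successor and invoking the result of \citet{GillesMallozzi2023} that the Gately value of any three-player (regular) TU-game lies in the Core.

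For (ii), I would first dispose of the case $N^b_D = \varnothing$, where $\xi_i(D) = s^a_D(i) = \beta_i(D)$ is already the unique Core power gauge by the characterisation of $\mathcal C(D)$ for simple-type networks. When $N^b_D \neq \varnothing$, weak regularity supplies a single value $p \geqslant 2$ with $p_D(j) = p$ for every $j \in N^b_D$, so that $\sum_{j \in N^b_D} p_D(j) = p \, n^b_D$. Substituting this into the formula from Theorem \ref{prop:GatelyMeasure} gives
$$
\xi_i (D) \; = \; s^a_D (i) \, + \, \frac{s^b_D (i)}{p} .
$$
Since every $j \in D(i) \cap N^b_D$ contributes a term $1/p$ to $\beta_i(D)$ and there are exactly $s^b_D(i)$ such $j$, the $\beta$-formula delivers the same expression. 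Hence $\xi(D) = \beta(D) \in \mathcal C(D)$.

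For (i), write $A = \{ i \in N \mid D(i) \neq \varnothing \}$ for the set of \emph{active} nodes, so by hypothesis $\# A \leqslant 3$. Two key observations organise the argument: first, $\xi_i(D) = 0$ whenever $i \notin A$, because then $s^a_D(i) = s^b_D(i) = 0$; and second, $\sigma_D(H) = \sigma_D(H \cap A)$ for every $H \subseteq N$, because every predecessor of a node is automatically an element of $A$. Together these show that each Core inequality for $\sigma_D$ at a coalition $H$ coincides with the Core inequality for the restricted game $\sigma_D\!\mid_A$ at $H \cap A$, while the efficiency equality reduces to $\sum_{j \in A} \xi_j(D) = \sigma_D(A) = n_D$. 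A direct calculation using $\sigma_D(A) = n_D$, $\sigma_D(A \setminus \{i\}) = n_D - s_D(i)$ for $i \in A$, and $\sum_{j \in A} s^b_D(j) = \sum_{j \in N^b_D} p_D(j)$ identifies the Gately value $g(\sigma_D\!\mid_A)$ with the restriction $\xi(D)\!\mid_A$. Since $\sigma_D\!\mid_A$ is a convex TU-game on at most three players, the three-player Core result of \citet{GillesMallozzi2023} applies and yields $g(\sigma_D\!\mid_A) \in \mathcal C (\sigma_D\!\mid_A)$, which via the reduction is exactly $\xi(D) \in \mathcal C(D)$.

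The main obstacle is the book-keeping in the reduction step for (i): one must verify carefully that the Gately value of the game restricted to the active players really coincides with the restriction of the Gately value computed on the full player set, and that the efficiency and coalitional inequalities are preserved both ways. Once that is in place, both parts reduce to results already available either from earlier in the paper (the $\beta$-measure is in the Core) or from \citet{GillesMallozzi2023} (three-player Gately values are Core imputations).
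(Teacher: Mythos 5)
Your proof is correct, but part (ii) takes a genuinely different route from the paper. The paper proves (ii) by a direct verification of the Core inequalities: for an arbitrary coalition $H$ it introduces $K_H = \{ j \in N^b_D \mid D^{-1}(j) \subseteq H \}$, writes $\sigma_D(H) = \sum_{i \in H} s^a_D(i) + \# K_H$, and bounds $\sum_{i \in H} \xi_i(D)$ from below by dropping the contribution of nodes in $N^b_D \setminus K_H$ and using $\sum_{i \in H} \# \{ j \in K_H \mid j \in D(i) \} = \sum_{j \in K_H} p_D(j) = p \cdot \# K_H$. You instead derive $\xi(D) = \beta(D)$ on weakly regular networks (which is exactly the paper's later Theorem \ref{thm:GatelyBeta}) and then invoke the proposition that $\beta(D)$ is always a Core power gauge. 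Your route is shorter and makes the logical link between the two theorems explicit, at the price of resting on the $\beta$-in-the-Core fact, which the paper only imports from the literature without proof; the paper's direct computation is self-contained and also shows where the slack in the Core inequalities comes from (the partially controlled nodes outside $K_H$). For part (i) you follow essentially the paper's approach—reduce to the at most three active players and cite the three-player Core result of \citet{GillesMallozzi2023}—but you spell out the reduction (that $\xi$ vanishes off the active set $A$, that $\sigma_D(H) = \sigma_D(H \cap A)$, and that the Gately value of the restricted game agrees with the restriction of $\xi$) which the paper compresses into the phrase that $\sigma_D$ is ``essentially a three-player game''; this is a useful filling-in of detail rather than a different argument.
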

\begin{proof}
	Assertion (i) follows directly from Theorem 4.2 of \citet{GillesMallozzi2023}. Here we note that the strong successor representation $\sigma_D$ is essentially a three-player if $\# \, \{ i \in N \mid D(i) \neq \varnothing \} = 3$ and a two-player game if $\# \, \{ i \in N \mid D(i) \neq \varnothing \} =2$. Both cases are covered by Theorem 4.2 of Gilles and Mallozzi, establishing that $\xi (D) = g(\sigma_D) \in \mathcal C ( \sigma_D) = \mathcal C(D)$ as desired. \\[1ex]
	Let $p \geqslant 2$ such that $p_D (j) = p$ for all $j \in N^b_D$ as assumed. From this it follows that for $i \in N \colon$
	\[
	\xi_i (D) = s^a_D (i) + \frac{n^b_D}{\sum_{j \in N^b_D} p_D (j)} \, s^b_D (i) = s^a_D (i) + \frac{n^b_D}{p \cdot n^b_D} \, s^b_D (i) = s^a_D (i) + \tfrac{1}{p} \, s^b_D (i)
	\]
	Now let $H \subseteq N$ be some node group. Define
	\[
	K_H = \left\{ j \in N^b_D \mid D^{-1} (j) \subseteq H \, \right\} \qquad \mbox{and} \qquad k_H = \# K_H \leqslant n^b_D .
	\]
	Now note that
	\[
	\sigma_d (H) = \# \, \left\{ \, j \in N \mid D^{-1} (j) \subseteq H \, \right\} = \sum_{i \in H} s^a_D (i) + k_H .
	\]
	We next compute
	\begin{align*}
		\sum_{i \in H} \xi_i (D) & = \sum_{i \in H} s^a_D (i) + \tfrac{1}{p} \sum_{i \in H} s^b_D (i) \\[1ex]
		& = \sum_{i \in H} s^a_D (i) + \tfrac{1}{p} \sum_{i \in H} \left[ \, \# \left\{ j \in K_H \mid j \in D(i) \, \right\} + \# \left\{ j \in N^b_D \setminus K_H \mid j \in D(i) \right\} \, \right] \\[1ex] 
		& \geqslant  \sum_{i \in H} s^a_D (i) + \tfrac{1}{p} \sum_{i \in H} \, \# \left\{ j \in K_H \mid j \in D(i) \, \right\}  \\[1ex]
		& = \sum_{i \in H} s^a_D (i) + \tfrac{1}{p} \sum_{j \in K_H} p_D (j) = \sum_{i \in H} s^a_D (i) + \tfrac{1}{p} \cdot p \, K_H \\[1ex]
		& = \sum_{i \in H} s^a_D (i) + K_H = \sigma_D (H) .
	\end{align*}
	Since $H$ was arbitrary, this establishes that $\sum_{i \in H} \xi_i(D) \geqslant \sigma_D (H)$ for \emph{all} coalitions $H \subseteq N$ and, therefore, $\xi (D) \in \mathcal C ( \sigma_D ) = \mathcal C (D)$, showing the second assertion of the theorem.
\end{proof}

\paragraph{An axiomatic characterisation of the Gately measure}

We are able to devise a full axiomatisation of the Gately measure on $\mathbb D^N$ based on three defining properties. In order to state these properties, we define for every hierarchical network $D \in \mathbb D^N$ on node set $N$ its \emph{principal restriction} as the network $P_D \in \mathbb D^N$ defined by $P_D (i) = N^b_D \cap D(i)$ for every node $i \in N$. Similarly, a hierarchical network $D \in \mathbb D^N$ is a \emph{principal network} if $D = P_D$. It is clear that a principal hierarchical network $D$ is characterised by the property that $N^a_D = \varnothing$, meaning that all nodes with predecessors have actually multiple predecessors.

\begin{theorem} \label{thm:AxiomatisationA}
	Let $\mathbb D^N$ be the class of hierarchical networks on node set $N$. Then the Gately measure $\xi \colon \mathbb D^N \to \mathbb R^N$ is the unique function $m \colon \mathbb D^N \to \mathbb R^N$ that satisfies the following three properties:
	\begin{numm}
		\item \textbf{Normalisation:} $m$ is $n_D$-normalised in the sense that $\sum_N m_i (D) = n_D$ for all $D \in \mathbb D^N$;
		\item \textbf{Normality:} For every hierarchical network $D \in \mathbb D^N$ it holds that
		\begin{equation}
			m(D) = s^a_D + m (P_D)
		\end{equation}
		where $P_D \in \mathbb D^N$ is the principal restriction of $D$, and;
		\item \textbf{Restricted proportionality:} For every principal network $D \in \mathbb D^N$ with $D=P_D$ it holds that
		\begin{equation}
			m(D) = \lambda_D \, s_D \qquad \mbox{for some } \lambda_D >0 .
		\end{equation}
	\end{numm}
\end{theorem}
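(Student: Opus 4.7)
The plan is to prove both existence (the Gately measure $\xi$ satisfies the three axioms) and uniqueness (the three axioms pin down the measure completely). The uniqueness half is the real content: axiom (ii) reduces the determination of $m(D)$ to the determination of $m$ on the principal restriction $P_D$, and then axioms (iii) and (i) together pin down the unknown scalar $\lambda_D$ on principal networks.

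For the existence part I would verify each axiom by direct computation from the closed-form expression for $\xi_i(D)$ in Theorem \ref{prop:GatelyMeasure}. Normalisation follows from $\sum_i s^a_D(i) = n^a_D$ and $\sum_i s^b_D(i) = \sum_{j \in N^b_D} p_D(j)$, which makes the proportional term in $\xi$ sum to $n^b_D$. For normality, the key observation is that the principal restriction satisfies $N^b_{P_D} = N^b_D$ with $p_{P_D}(j) = p_D(j)$ for every $j \in N^b_D$, while $N^a_{P_D} = \varnothing$; hence $s^a_{P_D} \equiv 0$ and $s^b_{P_D} = s^b_D$, which immediately gives $\xi(P_D) = \xi(D) - s^a_D$. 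For restricted proportionality on a principal network $D = P_D$ we have $s^a_D \equiv 0$ and $s_D = s^b_D$, so $\xi_i(D) = \lambda_D\, s_D(i)$ with $\lambda_D = n^b_D / \sum_{j \in N^b_D} p_D(j) > 0$.

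For uniqueness, suppose $m$ satisfies (i)--(iii). I would first fix a principal network $D$ (so $D = P_D$ and $N^a_D = \varnothing$, hence $n_D = n^b_D$ and $s_D = s^b_D$). Axiom (iii) gives $m(D) = \lambda_D\, s_D$ for some $\lambda_D > 0$, and axiom (i) forces $\lambda_D \sum_i s_D(i) = n_D = n^b_D$, so $\lambda_D = n^b_D / \sum_{j \in N^b_D} p_D(j)$, which matches $\xi(D)$ coordinate by coordinate. For an arbitrary $D \in \mathbb D^N$, axiom (ii) gives $m(D) = s^a_D + m(P_D)$; since $P_D$ is principal, the previous step yields $m(P_D) = \xi(P_D)$, and combining with the normality identity already verified for $\xi$ we conclude $m(D) = \xi(D)$.

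The one genuine subtlety is the degenerate case $N^b_D = \varnothing$: then $P_D$ is the edgeless network, which is vacuously principal, so (iii) applies with $s_{P_D} \equiv 0$ and forces $m(P_D) = 0$ (consistent with $n_{P_D} = 0$ under (i)); normality then yields $m(D) = s^a_D = \xi(D)$. I expect the main obstacle to be purely bookkeeping: verifying that the partitioning $N = N^o_D \cup N^a_D \cup N^b_D$ transforms correctly under the operation $D \mapsto P_D$, so that applying (iii) to $P_D$ actually reproduces the proportional term in the Gately formula. Once that identification $N^b_{P_D} = N^b_D$ with matching predecessor counts is made explicit, everything else is algebraic substitution.
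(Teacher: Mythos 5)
Your proposal is correct and follows essentially the same route as the paper: verify the three axioms for $\xi$ directly, then use restricted proportionality plus normalisation to pin down $\lambda_{P_D}$ on the principal restriction and normality to lift this to arbitrary $D$. If anything, you are more explicit than the paper in checking that $N^b_{P_D}=N^b_D$ with unchanged predecessor counts and $s_{P_D}=s^b_D$, which is the bookkeeping the paper's proof uses implicitly.
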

\begin{proof}
	We first show that the Gately measure $\xi$ indeed satisfies these three properties. Normalisation trivially follows from the definition of the Gately measure. Let $D \in \mathbb D^N$ be an arbitrary hierarchical network on $N$. \\ First, if $N^b_D = \varnothing$, we have that $s^b_D =0 \in \mathbb R^N$ and $\xi (D) = s^a_D \in \mathbb R^N_+$. Hence, $P_D$ is the empty network with $P_D (i) = \varnothing$ for all $i \in N$. Therefore, $s_{P_D} = s^b_D =0$ and $\xi (P_D)=0 = \lambda \, s^b_D = \lambda \, s_{P_D}$ for any $\lambda >0$. This implies that $\xi (D)$ satisfies the normality property as well as restricted proportionality for the case that $N^b_D = \varnothing$. \\ Second, in the case that $N^b_D \neq \varnothing$, we have that
	\[
	\xi (D) = s^a_D + \frac{n^b_D}{\sum_{j \in N^b_D} p_D (j)} \, s^b_D
	\]
	Furthermore, we compute that for every $i \in N \colon \xi_i(P_D) = \frac{n^b_D}{\sum_{j \in N^b_D} p_D (j)} \, s^b_D (i) = \lambda_D \, s^b_D (i)$, where $\lambda_D = \frac{n^b_D}{\sum_{j \in N^b_D} p_D (j)}$. This implies that $g$ satisfies restricted proportionality. \\[1ex] Next, let $m \colon \mathbb D^N \to \mathbb R^N$ be a power measure that satisfies the three given properties. \\ 
	First, consider a network $D \in \mathbb D^N$ with $N^b_D = \varnothing$. Hence, $s^b_D=0$, implying with (iii) that $m(P_D) = \lambda_D s^b_D =0$. Therefore, with (ii), it follows that $m(D) = s^a_D = \xi (D)$. \\ 
	Next, consider $D \in \mathbb D^N$ with $N^b_D \neq \varnothing$. Then, noting that $s_{P_D} = s^b_D$, from (iii) it follows that $m(P_D) = \lambda_D \, s^b_D > 0$\footnote{We use the definition that for $x,y \in \mathbb R^N \colon x > y$ if and only if $x_i \geqslant y_i$ for all $i \in N$ and $x_j > y_j$ for some $j \in N$.} and with (ii) this implies that
	\[
	m(D) = s^a_D + m (P_D) = s^a_D + \lambda_D \, s^b_D >0.
	\]
	Using the normalisation of $m$ stated as property (i), we conclude that $\lambda_D = \frac{n^b_D}{\sum_{j \in N^b_D} p_D (j)}$ and, therefore, $m(D) = \xi (D)$. \\[1ex]
	Uniqueness of $g$ as a power measure that satisfies the three listed properties is immediate and stated here without proof.
\end{proof}

\bigskip\noindent
The three properties stated in Theorem \ref{thm:AxiomatisationA} have a natural and direct interpretation. In particular, the normality property imposes that the power measure always assigns its uniquely subordinated nodes are assigned to a given node and that the main task of a power measure is to assign a power gauge for the principal restriction of any hierarchical network. This seems a rather natural hypothesis that is satisfied by other power measures such as the $\beta$-measure.

Restricted proportionality imposes that in a principal network the assigned power gauge is proportional to the number of other nodes that are controlled by that node. Again this seems a plausible hypothesis, even though it is violated by the $\beta$-measure.

In fact, the three properties are non-redundant as the following simple examples show:
\begin{itemize}
	\item As indicated above, with regard to the axiomatisation devised in in Theorem \ref{thm:AxiomatisationA}, the $\beta$-measure satisfies the normalisation property (i) as well as the normality property (ii), but not the restricted proportionality property (iii). The Restricted Egalitarian power measure $e$ is another example of a power measure on $\mathbb D^N$ that satisfies (I) as well as (ii), but not the Restricted Proportionality property (iii).
	\item Consider the proportional power measure $\rho$ on $\mathbb D^N$ with for every $D \in \mathbb D^N \colon$
	\[
	\rho (D) = \frac{n_D}{\sum_{i \in N} s_D (i)} \, s_D.
	\]
	Then this proportional power measure satisfies the normalisation property (i) as well as the restricted proportionality property (iii), but not the normality property (ii) stated in Theorem \ref{thm:AxiomatisationA}.
	\item Finally, consider the direct power measure $s$ on $\mathbb D^N$ with for every $D \in \mathbb D^N \colon$
	\[
	s (D) = s_D \in \mathbb R^N_+ .
	\]
	This direct power measure $s$ satisfies the restricted proportionality property (iiI) as well as the normality property (ii) stated in Theorem \ref{thm:AxiomatisationA}, but not the stated normalisation property (i).
\end{itemize}

\subsection{A comparison between the $\beta$-measure and the Gately measure}

On the class of weakly regular hierarchical networks, the Gately value satisfies the strong property that it coincides with the $\beta$-measure. This is explored in the next theorem.
\begin{theorem} \label{thm:GatelyBeta}
	Let $D \in \mathbb D^N_w$ be a weakly regular hierarchical network on $N$. Then the Gately measure coincides with the $\beta$-measure, i.e., $\xi (D) = \beta (D)$.
\end{theorem}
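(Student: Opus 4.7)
The approach is a direct calculation using the closed-form formulas already available for both measures, exploiting the common predecessor count that weak regularity provides. The plan is to split into the case $N^b_D = \varnothing$ (where both measures trivially reduce to $s^a_D$) and the case $N^b_D \neq \varnothing$, where the key shortcut is to let $p \geqslant 2$ denote the common value of $p_D(j)$ for all $j \in N^b_D$ and then simplify each formula separately.

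First I would dispatch the degenerate case $N^b_D = \varnothing$. Then $s^b_D \equiv 0$, so by Theorem \ref{prop:GatelyMeasure} we have $\xi_i(D) = s^a_D(i)$ for every $i \in N$. On the $\beta$ side, the sum in the definition $\beta_i(D) = s^a_D(i) + \sum_{j \in D(i) \cap N^b_D} 1/p_D(j)$ is empty, so $\beta_i(D) = s^a_D(i)$ as well, giving $\xi(D) = \beta(D)$.

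For the main case $N^b_D \neq \varnothing$, let $p = p_D(j)$ be the common number of predecessors shared by all $j \in N^b_D$. The Gately formula then simplifies as
\begin{equation*}
\xi_i(D) = s^a_D(i) + \frac{s^b_D(i)}{\sum_{j \in N^b_D} p_D(j)} \, n^b_D = s^a_D(i) + \frac{s^b_D(i)}{p \cdot n^b_D} \, n^b_D = s^a_D(i) + \frac{s^b_D(i)}{p},
\end{equation*}
using $\sum_{j \in N^b_D} p_D(j) = p \cdot n^b_D$. On the other side, each term $1/p_D(j)$ in the definition of $\beta_i(D)$ equals $1/p$ for $j \in D(i) \cap N^b_D$, so
\begin{equation*}
\beta_i(D) = s^a_D(i) + \sum_{j \in D(i) \cap N^b_D} \frac{1}{p_D(j)} = s^a_D(i) + \frac{1}{p} \cdot \#[D(i) \cap N^b_D] = s^a_D(i) + \frac{s^b_D(i)}{p}.
\end{equation*}
Comparing the two displays yields $\xi_i(D) = \beta_i(D)$ for each $i \in N$.

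There is no genuine obstacle here: weak regularity is engineered precisely so that the proportional allocator $s^b_D(i)/\sum_{j} s^b_D(j)$ of the Gately measure reduces to the same node-by-node $1/p$ weighting that defines $\beta$. The only care needed is to handle the $N^b_D = \varnothing$ case using the second branch of the Gately formula, and to note that weak regularity is exactly what collapses the denominator $\sum_{j \in N^b_D} p_D(j)$ to the product $p \cdot n^b_D$.
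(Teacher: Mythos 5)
Your proof is correct and follows essentially the same route as the paper: use weak regularity to collapse $\sum_{j \in N^b_D} p_D(j)$ to $p \cdot n^b_D$, so the Gately proportional share reduces to the per-node weight $1/p$ appearing in the $\beta$-measure. Your explicit treatment of the degenerate case $N^b_D = \varnothing$ is a small extra care the paper's proof leaves implicit, but it does not change the argument.
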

\begin{proof}
	To show the assertion, denote by $p = p_D (i) \geqslant 2$ the common number of predecessors for $i \in N^b_D$. Then it holds that
	\[
	\sum_{j \in N} s^b_D (j) = \sum_{i \in N^b_D} p_D (i) = p \cdot n^b_D .
	\]
	This implies simply that
	\[
	\xi (D) = s^a_D + \frac{n^b_D}{p \cdot n^b_D} \, s^b_D = s^a_D + \tfrac{1}{p} \, s^b_D = \beta (D) ,
	\]
	since for every $i \in N \colon \beta_i (D) = \sum_{j \in D(i)} \tfrac{1}{p_D (j)} = s^a_D (i) + s^b_D (i) \cdot \tfrac{1}{p}$.
\end{proof}

\bigskip\noindent
Theorems \ref{thm:GatelyCore} and \ref{thm:GatelyBeta} allow us to delineate networks with non-empty sets of Core power gauges that contain either the Gately measure, or the $\beta$-measure, or both, as well as determine when both of these measures coincide. This is explored in the next example.

\begin{figure}[h]
\begin{center}
\begin{tikzpicture}[scale=0.5]
\Vertex[x=10,y=10,color=white,label=1]{1}
\Vertex[x=5,y=5,color=blue!10,label=2]{2}
\Vertex[x=15,y=5,color=blue!10,label=3]{3}
\Vertex[x=5,y=0,color=blue!30,label=4]{4}
\Vertex[x=15,y=0,color=blue!30,label=5]{5}

\Edge[Direct](1)(2)
\Edge[Direct](1)(3)
\Edge[Direct](1)(4)
\Edge[Direct](1)(5)
\Edge[Direct](2)(4)
\Edge[Direct](2)(5)
\Edge[Direct](3)(5)
\end{tikzpicture}
\end{center}
\caption{Network for Example \ref{ex:contra}.}
\end{figure}
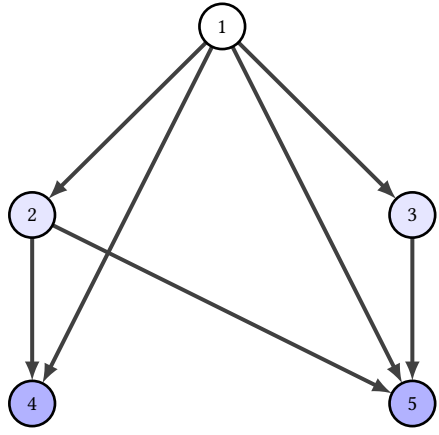

\begin{example} \label{ex:contra}
	Consider the network $D$ depicted in Figure 2 on the node set $N = \{ 1, \ldots , 5 \}$. Note that this network satisfies the conditions of Theorem \ref{thm:GatelyCore}(i), but not of Theorem \ref{thm:GatelyBeta}. Hence, $\xi (D) \in \mathcal C(D) \neq \varnothing$, but $\xi (D) \neq \beta (D) \in \mathcal C(D)$. \\ 
	We compute that the set of Core power gauges for this network is given by\footnote{We point out that there are two simple subnetworks of $D$ that result in exactly the same power gauge, namely the subnetwork in which node 1 dominates node 4 and node 2 dominates node 5 and vice versa. The resulting power gauge is $(3,1,0,0,0)$.}
	\[
	\mathcal C (D) = \mathrm{Conv} \, \left\{ \, (2,1,1,0,0) \, , \, (3,0,1,0,0) \, , \, (3,1,0,0,0) \, , \, (2,2,0,0,0) \, , \, (4,0,0,0,0) \, \right\}
	\]
	Next, we determine that the $\beta$-measure is in the (weighted) centre of $\mathcal C(D)$ with
	\[
	\beta (D) = \left( \, 2 \tfrac{5}{6} \, , \, \tfrac{5}{6} \, , \, \tfrac{1}{3} \, , \, 0 \, , \, 0 \, \right) \in \mathcal C (D)
	\]
	and that the Gately measure is computed as
	\[
	\xi (D) = \left( \, 2 \tfrac{4}{5} \, , \, \tfrac{4}{5} \, , \, \tfrac{2}{5} \, , \, 0 \, , \, 0 \, \right) \in \mathcal C (D) .
	\]
	Clearly, we have established that in this case $\xi (D) \neq \beta (D)$ even though the network $D$ satisfies the condition of Theorem \ref{thm:GatelyCore}(i), implying that the Gately measure is a Core power gauge.
\end{example}

\noindent
A question remaining is whether the assertion of Theorem \ref{thm:GatelyBeta} can be reversed, i.e., if $\xi (D) = \beta (D)$ then $D$ has to be weakly regular. The answer to that is negative as the following example shows.

\begin{figure}[h]
\begin{center}
\begin{tikzpicture}[scale=0.5]
\Vertex[x=0,y=5,color=white,label=1]{1}
\Vertex[x=6,y=5,color=white,label=2]{2}
\Vertex[x=12,y=5,color=blue!10,label=3]{3}
\Vertex[x=16,y=5,color=blue!10,label=4]{4}
\Vertex[x=3,y=0,color=red!10,label=5]{5}
\Vertex[x=10,y=0,color=red!30,label=6]{6}
\Vertex[x=14,y=0,color=red!30,label=7]{7}
\Vertex[x=18,y=0,color=red!30,label=8]{8}

\Edge[Direct](1)(5)
\Edge[Direct](1)(5)
\Edge[Direct](2)(5)
\Edge[Direct](2)(5)
\Edge[Direct](3)(6)
\Edge[Direct](3)(7)
\Edge[Direct](3)(8)
\Edge[Direct](4)(6)
\Edge[Direct](4)(7)
\Edge[Direct](4)(8)
\end{tikzpicture}
\end{center}
\caption{Network for Example \ref{ex:contraMore}.}
\end{figure}
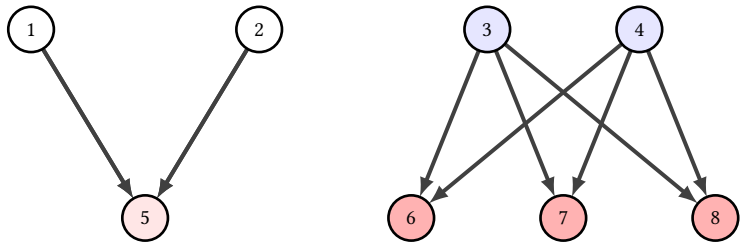

\begin{example} \label{ex:contraMore}
	Consider the node set $N = \{ 1, \ldots , 8 \}$ and the network $D$ depicted in Figure 3. As the colour code indicates, there are four groups of nodes in this network. Nodes 1 and 2 together dominate node 5, while nodes 3 and 4 together dominate nodes 6,7 and 8. \\
	Clearly, this network is not weakly regular. On the other hand, we claim that $\xi (D) = \beta (D)$. Now, we compute that $N^a_D = \varnothing$, $N^b_D = \{ 5,6,7,8 \}$ and $n^b_D =4$. Furthermore, $\sum_{i \in N^b_D} p_D (i) =2+ 3 \cdot 2 =8$. This implies now that for every $i \in N \colon$
	\[
	\xi_i (D) = \frac{n^b_D}{\sum_{i \in N^b_D} p_D (i)} \, s^b_D (i) = \tfrac{4}{8} \, s^b_D (i) = \tfrac{1}{2} \, s^b_D (i)
	\]
	resulting in $\xi (D) = \left( \tfrac{1}{2} \, , \, \tfrac{1}{2} \, , 1\tfrac{1}{2} \, , 1\tfrac{1}{2} \, , \, 0 \, , \, 0 \, , \, 0 \, \right)$ and that this coincides with $\beta (D)$.
\end{example}

\singlespace
\bibliographystyle{ecta}
\bibliography{RPDB}

\end{document}